\newtheorem{theorem}{Theorem}
\newtheorem{lemma}{Lemma}
\newtheorem{claim}{Claim}
\newenvironment{proof}{\smallskip\par\noindent {\em Proof. }}{~\hfill$\Box$~}
\title{Approximating 1-dimensional TSP requires $\Omega(n\log n)$ comparisons}
\begin{document}
\maketitle
\begin{abstract}
  We give a short proof that any comparison-based 
  $(n^{1-\epsilon})$-approximation algorithm
  for the 1-dimensional Traveling Salesman Problem (TSP) requires 
  $\Omega(n\log n)$ comparisons.
\end{abstract}

\section*{Introduction}
In September of 2012 David Eppstein asked the following question on 
\href{http://cstheory.stackexchange.com/questions/12593/approximate-1d-tsp-with-linear-comparisons}{cstheory.stackexchange.com}:

\begin{quote}
  ``The one-dimensional traveling salesperson path problem is, obviously, the same thing as sorting, and so can be solved exactly by comparisons in $O(n\log n)$ time, but it is formulated in such a way that approximation as well as exact solution makes sense. In a model of computation in which the inputs are real numbers, and rounding to integers is possible, it's easy to approximate to within a $1+O(n^{-c})$ factor, for any constant $c$, in time $O(n)$: find the min and max, round everything to a number within distance $(\max-\min)n^{-(c+1)}$ of its original value, and then use radix sort. But models with rounding have problematic complexity theory and this led me to wonder, what about weaker models of computation?

So, how accurately can the one-dimensional TSP be approximated, in a linear comparison tree model of computation (each comparison node tests the sign of a linear function of the input values), by an algorithm whose time complexity is $o(n\log n)$? The same rounding method allows any approximation ratio of the form $n^{1-o(1)}$ to be achieved (by using binary searches to do the rounding, and rounding much more coarsely to make it fast enough). But is it possible to achieve even an approximation ratio like $O(n^{1-\epsilon})$ for some $\epsilon>0$?''
\hfill \cite{eppstein2012}
\end{quote}

The purpose of this short note is to document, in a citable form, a negative answer.

\begin{theorem}\label{thm:1}
For any constant $\epsilon>0$, every comparison-based 
$n^{1-\epsilon}$-approximation algorithm for 1-dimensional TSP
requires $\Omega(\epsilon n\log n)$ comparisons in the worst case.
\end{theorem}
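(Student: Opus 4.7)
The plan is a counting/comparison-tree argument. Restrict the adversary's inputs to the $n!$ permutations $y$ of $\{1,\ldots,n\}$; on any such input the optimum tour length equals $n-1$ (visit in sorted order). Writing $\pi := y\circ\tau$ for the composition of the input with the algorithm's output visit-order $\tau$, the realized tour cost equals $\sum_{i=1}^{n-1}|y_{\tau(i+1)}-y_{\tau(i)}|=\sum_{i=1}^{n-1}|\pi(i+1)-\pi(i)|$. Hence a correct $n^{1-\epsilon}$-approximation on such inputs is equivalent to $\pi$ lying in the set
\[
T \;:=\; \Bigl\{\pi\in S_n : \sum_{i=1}^{n-1}|\pi(i+1)-\pi(i)|\le n^{1-\epsilon}(n-1)\Bigr\}.
\]

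The heart of the proof is an upper bound on $|T|$. Each $\pi\in T$ is determined by its starting value $\pi(1)\in[n]$ together with its vector of signed differences $(\pi(i+1)-\pi(i))_{i=1}^{n-1}\in(\mathbb{Z}\setminus\{0\})^{n-1}$. Dropping the permutation constraint gives a clean gross upper bound: choose signs (a factor $2^{n-1}$) and then magnitudes, the latter a positive-composition problem bounded via stars-and-bars by $\binom{n^{1-\epsilon}(n-1)}{n-1}\le(e\,n^{1-\epsilon})^{n-1}$. This yields $|T|\le n\cdot(2e\,n^{1-\epsilon})^{n-1}$ and so $\log_2|T|\le(1-\epsilon)\,n\log_2 n+O(n)$.

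Finally, pigeonhole on the comparison tree finishes the job: each leaf outputs a fixed permutation $\tau$, and the set $\{y\in S_n : y\circ\tau\in T\}=T\tau^{-1}$ has size exactly $|T|$ because right-composition by $\tau$ is a bijection of $S_n$. Thus at most $|T|$ of the $n!$ permutation inputs can be correctly routed to any single leaf, so the tree has $\ge n!/|T|$ leaves and depth at least
\[
\log_2(n!)-\log_2|T|\;\ge\;n\log_2 n-(1-\epsilon)n\log_2 n-O(n)\;=\;\Omega(\epsilon n\log n),
\]
which is what is claimed. The only step that takes real care is the count of $|T|$; the rest is a bijection-plus-pigeonhole computation, and the argument uses nothing about the comparisons beyond the fact that each leaf outputs a fixed permutation (so it applies equally to the linear-comparison-tree model from Eppstein's original question).
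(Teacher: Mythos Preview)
Your proof is correct, and it shares the high-level scaffolding with the paper's: restrict to the $n!$ permutation inputs, observe that a leaf with fixed output $\tau$ correctly handles only inputs $y$ with $y\circ\tau$ in the low-cost set $T$, use that right-composition by $\tau$ is a bijection so each leaf handles at most $|T|$ inputs, and conclude via pigeonhole that the depth is at least $\log_2(n!/|T|)$. Where you diverge is in the core estimate on $|T|$. The paper argues probabilistically: if $c(\pi)\le n^{2-\epsilon}$ then by Markov at least $n/2$ of the steps have magnitude below $q=n^{1-\epsilon/2}$, and it bounds the probability that a fixed set of $n/2$ steps is ``cheap'' by a product of conditional probabilities, then takes a union bound over $\binom{n}{n/2}$ such sets. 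You instead encode $\pi$ by $\pi(1)$ and the signed difference vector and count directly with stars-and-bars, getting $|T|\le n\cdot 2^{n-1}\binom{\lfloor n^{1-\epsilon}(n-1)\rfloor}{\,n-1\,}\le n\,(2e\,n^{1-\epsilon})^{n-1}$. Your route is shorter and more elementary---no auxiliary threshold $n^{1-\epsilon/2}$, no union bound---and it yields the same $\epsilon n\log n - O(n)$ bound with a slightly better implied constant. The paper's argument, on the other hand, makes the ``most steps must be short'' intuition explicit and would adapt more readily to variants where the difference-vector encoding is less clean.
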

A ``comparison-based'' algorithm is one that 
queries the input only with binary (True/False) queries.

Note: after the initial draft of this note, it was pointed out to me
that Das, Kapoor and Smid proved the following result in 1997:
\begin{theorem}[\cite{das1997complexity}, Thm.~1]
Let $d\ge 1$ be an integer constant.  In the algebraic computation-tree model,
any algorithm that, given a set $S$ of $n$ points in $\mathbb{R}^d$ and a sufficiently
large real number $r<n$, computes an $r$-approximate TSP-tour for $S$,
takes $\Omega(n\log(n/r))$ time in the worst case.
\end{theorem}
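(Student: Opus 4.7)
The plan is an adversary--pigeonhole argument on the algorithm's decision tree. Any algorithm using $c$ binary queries partitions the inputs into at most $2^c$ leaves, each labelled by a fixed output permutation $\pi \in S_n$ that the algorithm returns there. So if I can construct a family $\mathcal{F}$ of inputs with the property that no single permutation $\pi$ is an $n^{1-\epsilon}$-approximate tour for more than $M$ members of $\mathcal{F}$, pigeonhole forces $2^c \cdot M \ge |\mathcal{F}|$, hence $c \ge \log_2(|\mathcal{F}|/M)$. My entire task is then to design $\mathcal{F}$ so that this quantity is $\Omega(\epsilon n \log n)$.

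For this I would pick $k := \lceil n^{\epsilon/2}\rceil$ tight ``clusters'' and a tiny tie-breaker $\delta := n^{-10}$. For each function $f : [n] \to [k]$ associate the input $x_i := f(i) + i\delta$: think of $k$ clusters of points sitting at integer positions $1,2,\dots,k$, with $f$ recording which cluster each index belongs to. Take $\mathcal{F}$ to be the inputs coming from surjective $f$; standard inclusion--exclusion gives $|\mathcal{F}| \ge k^n/2$ for $n$ large. For any such input, $\mathrm{OPT} = x_{\max}-x_{\min} = (k-1)+O(n\delta)$, while the tour length of a permutation $\pi$ is
\[
L(\pi,f) \;=\; \sum_{j=1}^{n-1}\!|x_{\pi(j+1)} - x_{\pi(j)}| \;=\; \mathrm{TV}(f\circ\pi) + O(n^2\delta),
\]
where $\mathrm{TV}$ denotes the total variation of an integer sequence. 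So ``$\pi$ is an $n^{1-\epsilon}$-approximation for $f$'' reduces, up to negligible error, to the clean combinatorial condition $\mathrm{TV}(g)\le T := n^{1-\epsilon}(k-1) < n^{1-\epsilon/2}$ on the sequence $g := f\circ\pi \in [k]^n$.

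The key counting step is then this: for each fixed $\pi$ the map $f \mapsto g = f\circ\pi$ is a bijection, so $M$ is bounded by the number of $g \in [k]^n$ with $\mathrm{TV}(g)\le T$. Any such sequence has at most $T$ transitions; specifying the transition positions and then the value on each run gives
\[
M \;\le\; \sum_{t=0}^{T}\binom{n-1}{t}\,k(k-1)^{t} \;\le\; (T+1)\,(en/T)^{T}\,k^{T+1} \;=\; 2^{O(\epsilon\, n^{1-\epsilon/2}\log n)}.
\]
Since $\log_2 |\mathcal{F}| \ge n\log_2 k - 1 = \tfrac{\epsilon}{2}\,n\log_2 n - O(1)$, subtraction gives $c \ge \log_2(|\mathcal{F}|/M) = \Omega(\epsilon n \log n)$ once $n$ is large enough relative to $1/\epsilon$.

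The main thing to get right is the choice of $k$, which faces two opposing pressures: making $\log|\mathcal{F}| = n\log k$ as large as possible wants $k$ big, but the variation budget $T = n^{1-\epsilon}(k-1)$ must remain $o(n)$ in order to keep $M$ subexponential, which wants $k$ small. The compromise $k = n^{\epsilon/2}$ threads this needle; the naive guess $k = n^\epsilon$ would give $T = \Theta(n)$, admitting essentially every permutation of $[n]$ as an ``approximate'' tour and collapsing the counting argument.
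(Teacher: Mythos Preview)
First, note that the paper does not itself prove this statement: it is quoted from Das, Kapoor and Smid, and the paper only remarks that their proof is ``Ben-Or-style.'' What the paper \emph{does} prove is Theorem~1, the comparison-based $n^{1-\epsilon}$ special case, and your proposal is really an alternative proof of \emph{that} result rather than of the cited theorem. As written, your argument covers neither general $d\ge 1$ nor general $r<n$, and your ``$2^c$ leaves'' counting is phrased for binary-query trees rather than algebraic computation trees; to match the cited statement you would at minimum need to replace $r=n^{1-\epsilon}$ by an arbitrary $r$ (choosing, e.g., $k\approx\sqrt{n/r}$) and observe that the leaf-count bound survives in the algebraic model because the output is still a discrete permutation.

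Viewed as a proof of Theorem~1, your argument is correct and takes a genuinely different route from the paper's. The paper uses the $n!$ permutation inputs (all with $\mathrm{OPT}=n-1$) and shows that for a random permutation $\pi$ the cost $c(\pi)=\sum_i|\pi(i+1)-\pi(i)|$ exceeds $n^{2-\epsilon}$ except with probability $\exp(-\Omega(\epsilon n\log n))$; the probability bound is obtained by a Markov-plus-union-bound step (few expensive edges $\Rightarrow$ a large set of cheap edges) followed by a sequential-revelation estimate on the chance that $n/2$ specified edges are all cheap. You instead build a family of clustered inputs indexed by surjections $f:[n]\to[k]$ with $k=n^{\epsilon/2}$, reduce ``$\pi$ is a good tour'' to a total-variation bound on $f\circ\pi$, and count low-variation $[k]$-valued sequences directly. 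Both proofs sit on the same pigeonhole skeleton (Lemma~1 in the paper); what differs is the hard-instance family and the counting. Your construction makes the counting elementary (runs and transitions) at the cost of a somewhat artificial input family and a tuning parameter $k$; the paper's construction is more canonical (just permutations) but needs a slightly subtler probabilistic estimate.
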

They also give related results for minimum-spanning trees.

Note that, by taking $r=n^{1-\epsilon}$, their result
essentially implies Thm.~\ref{thm:1}.  
The proof in \cite{das1997complexity} uses a Ben-Or-style proof,
different from the direct proof here.

\section*{Proof of Theorem~\ref{thm:1}}
Fix any constant $\epsilon>0$ and an arbitrarily large positive integer $n$.

Consider just the $n!$ "permutation" input instances 
$(x_1,x_2,\ldots,x_n)$
that are permutations of $[n] = \{1,2,\ldots,n\}$.
The optimum solution for any such instance has cost $n-1$.

Define the {\em cost} of a permutation $\pi$
to be $c(\pi) = \sum_i |\pi(i+1) - \pi(i)|$.
Model the algorithm as taking as input a permutation $\pi$,
returning some permutation $\pi'$,
and paying cost $d(\pi,\pi') = c(\pi'\circ \pi)$.

Define $C=C(n,\epsilon)$ to be the minimum number of comparisons
for any comparison-based algorithm to achieve
competitive ratio $n^{1-\epsilon}$ on these instances.
Since opt is $n-1$, the algorithm must guarantee cost at most $n^{2-\epsilon}$.
We will show $C\ge \Omega(\epsilon n\log n) - O(n)$.

Define $P=P(n,\epsilon)$ to be, for any possible output $\pi'$,
the fraction of possible inputs for which output $\pi'$ 
would achieve cost at most $n^{2-\epsilon}$.
This fraction is independent of $\pi'$.

$P$ also equals the probability that, for a random permutation $\pi$,
its cost $c(\pi)$ is at most $n^{2-\epsilon}$.
(To see why, take $\pi'$ to be the identity permutation $I$.
Then $P$ is the fraction of inputs for which
$d(\pi,I)$ at most $n^{2-\epsilon}$,
but $d(\pi,I) = c(\pi)$.)

\begin{lemma}\label{lemma:1}
$C(n,\epsilon) \ge \log_2 1/P(n,\epsilon)$.
\end{lemma}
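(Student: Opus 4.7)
The plan is to model the algorithm as a binary decision tree and then apply a simple counting argument over its leaves. Since the algorithm is comparison-based (each query returns one of two outcomes), an algorithm making at most $C$ comparisons in the worst case corresponds to a binary decision tree of depth at most $C$, hence having at most $2^C$ leaves. Every input permutation $\pi$ traces some root-to-leaf path, and the output $\pi'$ depends only on the leaf reached; in particular, every leaf $\ell$ is labeled with a single output permutation $\pi'_\ell$.

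For the algorithm to achieve competitive ratio $n^{1-\epsilon}$ on every permutation instance, it must be the case that for every input $\pi$ reaching leaf $\ell$, the output $\pi'_\ell$ satisfies $d(\pi,\pi'_\ell)\le n^{2-\epsilon}$. In other words, the set of inputs routed to leaf $\ell$ must be a subset of $S_\ell = \{\pi : d(\pi,\pi'_\ell)\le n^{2-\epsilon}\}$. By the observation preceding the lemma, the fraction of permutations in $S_\ell$ is exactly $P$, independent of the label $\pi'_\ell$, so $|S_\ell| = P\cdot n!$.

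Since the leaves partition the input space of $n!$ permutations, and each leaf covers at most $P\cdot n!$ inputs, we get
\[
n! \;=\; \sum_{\ell} |\{\pi : \pi \text{ reaches } \ell\}| \;\le\; \sum_{\ell} |S_\ell| \;\le\; 2^C \cdot P \cdot n!.
\]
Rearranging gives $2^C \ge 1/P$, i.e., $C \ge \log_2 1/P$, which is the desired bound.

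I do not expect any real obstacle: the only subtlety is making sure that one invokes the ``independent of $\pi'$'' observation correctly so that the per-leaf bound $|S_\ell|\le P\cdot n!$ holds uniformly. The rest is a standard decision-tree leaf-counting argument of the same flavor as the $\Omega(n\log n)$ lower bound for sorting.
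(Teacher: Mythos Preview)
Your proof is correct and takes essentially the same approach as the paper: both model the algorithm as a binary decision tree with at most $2^C$ leaves, observe that each leaf's output can serve at most a $P$-fraction of inputs (using that $P$ is independent of $\pi'$), and conclude $2^C\ge 1/P$. The paper phrases the last step as a pigeonhole/contradiction (fewer than $1/P$ leaves forces some output to be used on more than a $P$-fraction of inputs, hence incorrectly on at least one), whereas you sum leaf capacities directly; these are the same counting argument.
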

\begin{proof}
  Fix any algorithm that uses less than $\log_2 1/P$
  comparisons on each of the inputs.  The decision tree for the algorithm has depth less
  than $\log_2 1/P$, so there are less than $1/P$ leaves, and, for
  some output permutation $\pi'$, the algorithm gives $\pi'$ as output
  for more than a $P$ fraction of the inputs.  By definition of $P$,
  for at least one such input, the output $\pi'$ gives cost more than
  $n^{2-\epsilon}$, so the algorithm is not a $n^{1-\epsilon}$-approximation 
  algorithm.
\end{proof}

\begin{lemma}\label{lemma:2}
$P(n,\epsilon) \le \exp(-\Omega(\epsilon n\log n))$.
\end{lemma}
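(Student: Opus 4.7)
The plan is to upper-bound the number of permutations $\pi \in S_n$ with cost at most $K := n^{2-\epsilon}$ by an encoding/counting argument, then divide by $n!$ to bound $P$. Without loss of generality $\epsilon \le 1$; for $\epsilon > 1$ and $n$ large we have $K < n-1 \le c(\pi)$, forcing $P = 0$.

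The encoding step is to observe that any permutation $\pi$ is determined by $\pi(1) \in [n]$ together with the signed integer differences $d_i := \pi(i+1) - \pi(i)$ for $i = 1, \ldots, n-1$, and that $c(\pi) = \sum_i |d_i|$. Hence the number of $\pi$ with $c(\pi) \le K$ is at most $n \cdot N(n-1, K)$, where $N(m, K) := |\{d \in \mathbb{Z}^m : \sum_i |d_i| \le K\}|$ is the number of integer points in the $\ell_1$-ball of radius $K$ in $\mathbb{R}^m$.

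I would then bound $N(m, K)$ by a standard volume argument: the unit cubes centered at these lattice points are disjoint and contained in the $\ell_1$-ball of radius $K + m/2$, so $N(m, K) \le (2K + m)^m / m!$. Plugging in $m = n-1$ and $K = n^{2-\epsilon}$ (using $2K + m \le 3n^{2-\epsilon}$ for $\epsilon \le 1$ and large $n$), and combining with the elementary lower bounds $n! \ge (n/e)^n$ and $(n-1)! \ge ((n-1)/e)^{n-1}$, I would obtain an upper bound on $P$ whose logarithm simplifies to $-\epsilon n \log n + O(n)$. For $n$ large relative to $1/\epsilon$, this is $-\Omega(\epsilon n \log n)$, as required.

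The main obstacle is nothing conceptual, but rather the final bookkeeping: verifying that the dominant factor $n^{(1-\epsilon)(n-1)}$ (coming from $K^{n-1}$ after the $n^{n-1}$ scaling is absorbed by $(n-1)!$), together with the $n^{-n}$ contribution from $1/n!$, yields $n^{-\epsilon n}$ up to lower-order corrections, and then that the $O(n)$ correction in the exponent is dominated by $\Omega(\epsilon n \log n)$ once $\epsilon \log n$ exceeds a suitable constant.
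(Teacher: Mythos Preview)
Your argument is correct and takes a genuinely different route from the paper's. The paper proceeds probabilistically: it fixes a threshold $q = n^{1-\epsilon/2}$, observes that any permutation with $c(\pi) \le n^{2-\epsilon}$ must have at least $n/2$ ``cheap'' edges (edges of cost below $q$), and then bounds, for any fixed set $S$ of $n/2$ positions, the probability that every edge in $S$ is cheap. This is done by revealing $\pi(1),\pi(2),\ldots$ sequentially and noting that at each step at most $2n^{1-\epsilon/2}$ of the remaining values would make the new edge cheap; multiplying and then taking a union bound over the $\binom{n}{n/2}\le 2^n$ choices of $S$ gives the result.

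Your encoding argument is more direct: you bypass the threshold $q$ and the union bound over subsets entirely by injecting the low-cost permutations into the lattice points of an $\ell_1$-ball in $\mathbb{R}^{n-1}$ and bounding those by a volume estimate. This is cleaner and arguably more elementary; the only thing to watch is that the map $(\pi(1),d_1,\ldots,d_{n-1})$ is many-to-one in the reverse direction (most difference sequences do not give permutations), but since you only need an upper bound that is harmless. The paper's approach, by contrast, stays closer to the sequential structure of a random permutation and makes the ``each step has only a $n^{-\epsilon/2}$ chance of being cheap'' intuition explicit. Both arguments arrive at $\log P \le -\epsilon n\log n + O(n)$, with the $O(n)$ slack absorbed once $\epsilon\log n$ exceeds a fixed constant.
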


Before we prove Lemma~\ref{lemma:2}, note that the two lemmas together
prove the theorem: 
$$C
~\ge~ \log_2 \frac{1}{P}
~=~ \log_2 \exp(\Omega(\epsilon n\log n))
~=~ \Omega(\epsilon n\log n).$$

To finish proving the theorem, we prove Lemma~\ref{lemma:2}.

Let $\pi$ be a random permutation.
Recall that $P$ equals the probability that
its cost $c(\pi)$ is at most $n^{2-\epsilon}$.
Say that any pair $(i,i+1)$ is an {\em edge}
with cost $|\pi(i+1)-\pi(i)|$,
so $c(\pi)$ is the sum of the edge costs.

Suppose $c(\pi) \le n^{2-\epsilon}$.

Then, for any $q>0$, at most $n^{2-\epsilon}/q$ of the edges have cost $q$ or more.
Say that edges of cost less than $q$ are {\em cheap}.

Fix $q=n^{1-\epsilon/2}$.  Substituting and simplifying, at most $n^{1-\epsilon/2}$ of the 
edges are not cheap.

Thus, {\em at least} $n - n^{1-\epsilon/2} \ge n/2$ of the edges are cheap.
Thus, there is a set $S$ containing $n/2$ cheap edges.

\begin{claim}
For any given set $S$ of $n/2$ edges,
the probability that all edges in $S$ are cheap
is at most $\exp(-\Omega(\epsilon n \log n))$.
\end{claim}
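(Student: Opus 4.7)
The plan is to reveal $\pi$ one position at a time and apply a conditional-probability chain rule. I would expose $\pi(1), \pi(2), \ldots, \pi(n)$ in order, so that conditional on $\pi(1), \ldots, \pi(i)$, the next value $\pi(i+1)$ is uniform over the $n-i$ unused integers in $[n]$. If the edge $(i, i+1)$ lies in $S$, cheapness demands that $\pi(i+1)$ land within distance $q$ of $\pi(i)$; there are at most $2q$ such integers in $[n]$, so the conditional probability that this edge is cheap is at most $\min\!\bigl(2q/(n-i),\, 1\bigr)$ almost surely.

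Next, I would iterate the tower property from the innermost index outward to collapse these bounds into a single product:
$$\Pr[\text{all edges of } S \text{ are cheap}] \;\le\; \prod_{(i,i+1) \in S} \min\!\left(\frac{2q}{n-i},\; 1\right).$$
Each factor is largest when $n-i$ is small, so to obtain a bound valid for every $S$ with $|S|=n/2$ it suffices to take the $S$ whose edges have the largest possible first endpoints, namely $(n/2, n/2+1), (n/2+1, n/2+2), \ldots, (n-1, n)$. Substituting $m = n-i$ and discarding the $m \le 2q-1$ factors (capped at $1$), this worst-case product reduces to $\prod_{m=2q}^{n/2} (2q)/m$.

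Finally, I would estimate this product via Stirling. Its logarithm simplifies to roughly $(n/2)\log(4q/n) + O(q\log q) + O(n)$; plugging $q = n^{1-\epsilon/2}$ turns the first term into $-(\epsilon/4)\,n\log n + O(n)$, which dominates since $O(q\log q) = O(n^{1-\epsilon/2}\log n) = o(n\log n)$. The main obstacle is this final bookkeeping --- verifying that the configuration of $S$ I chose really is the worst case, and that the $-\Omega(\epsilon n\log n)$ contribution from $(n/2)\log(q/n)$ genuinely beats the lower-order corrections $O(n) + O(q \log q)$.
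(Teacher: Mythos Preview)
Your argument is correct and follows the paper's approach almost verbatim: sequential revelation of $\pi$, the conditional bound $\min(1,2q/(n-i))$, and the chain-rule product. The only divergence is in bounding the product --- the paper sidesteps your worst-case-$S$ and Stirling computation by simply observing that at least $n/4$ of the $n/2$ edges in $S$ have $n-i \ge n/4$, giving the product an upper bound of $(8q/n)^{n/4} = (8n^{-\epsilon/2})^{n/4}$ directly; your worry about verifying the worst-case $S$ is therefore unnecessary (though your monotonicity argument is fine: larger $i$ means larger factor).
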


Before we prove the claim, note that it implies Lemma~\ref{lemma:2} as follows.
By the claim, and the naive union bound,
the probability that any there {\em exists} such a set $S$ 
is at most 
$${n\choose n/2} \exp(-\Omega(\epsilon n \log n))
~\le~ 2^n \exp(-\Omega(\epsilon n \log n))$$
$$~\le~ \exp(O(n) -\Omega(\epsilon n \log n))
~\le~ \exp(-\Omega(\epsilon n \log n)).$$

To complete the proof of the lemma and the theorem,
we prove the claim.

Choose $\pi$ by the following process.
Choose $\pi(1)$ uniformly from $[n]$,
then choose $\pi(2)$ uniformly from $[n] - \{\pi(1)\}$,
then choose $\pi(3)$ uniformly from $[n]-\{\pi(1),\pi(2)\}$, etc.

Consider any edge $(i,i+1)$ in $S$.
Consider the time just after $\pi(i)$ has been chosen,
when $\pi(i+1)$ is about to be chosen.
Regardless of the first $i$ choices (for $\pi(j)$ for $j\le i$),
there are at least $n-i$ choices for $\pi(i+1)$,
and at most $2n^{1-\epsilon/2}$ of those choices will give the edge $(i,i+1)$
cost less than $n^{1-\epsilon/2}$ (making it cheap).

Thus, conditioned on the first $i$ choices,
the probability that the edge is cheap
is at most $\min(1,\frac{2n^{1-\epsilon/2}}{n-i})$.
Thus, the probability that all $n/2$ 
edges in $S$ are cheap is at most 
$$\prod_{(i,i+1)\in S} \min\Big(1,\frac{2n^{1-\epsilon/2}}{n-i}\Big).$$
Since $|S|\ge n/2$, there are at least $n/4$ edges in $S$
with $n-i\ge n/4$. Thus, this product is at most
$$\big(\frac{2n^{1-\epsilon/2}}{n/4}\big)^{n/4}
~\le~(8n^{-\epsilon/2})^{n/4}
~=~\exp(O(n)-\Omega(\epsilon n \log n))
~=~\exp(-\Omega(\epsilon n \log n)).$$

This proves the theorem.~\hfill$\Box$

\bibliographystyle{plain}
\bibliography{bib}

\end{document}